\title{{\large\bf A DSA-like digital signature protocol  \\
}}
\author{
Leila Zahhafi\footnote{leila.zahhafi@gmail.com} \ and Omar Khadir\footnote{khadir@hotmail.com}\\
Laboratory of Mathematics, Cryptography, Mechanics \\ and Numerical Analysis, Fstm\\
  University Hassan II of Casablanca,   Morocco\\
}
\date{}
\newtheorem{theorem}{Theorem}
\theoremstyle{definition}
\begin{document}
\maketitle

\baselineskip=20pt

\noindent\hrulefill

 {\abstract \small In this paper we propose a new digital signature protocol inspired
by the DSA algorithm. The security and the complexity are
analyzed. Our method constitutes  an alternative if the classical
scheme DSA is broken.

}

\noindent\hrulefill

 \vspace{0.3cm} \noindent {\small \bf Keywords :} \
{\small Public key cryptography.  Digital signature . DSA
protocol. Discrete logarithm problem. }

\vspace{0.5cm}

\noindent {{\small \bf MSC 2010 : } {\small 94A60, 11T71}

\section{Introduction}
\label{IntroductionS1}

Moderne data protection started with the work of
Shannon\cite{Shannon1949} in $1949$ on information theory. However
modern public key cryptography appeared clearly when, in $1976$
Diffie and Hellman\cite{DH1976} showed how any two network users
can construct a common secret key even if they never meet. One
year and few months later the RSA\cite{RSA1978} method was
published. It is considered as the most used cryptosystem in the
daily life.

Digital signature is an important tool in cryptography. Its role
in funds transfert, online business, electronic emails, users
identification or documents integrity is essential. Let us recall
its mechanism principle. A trusted authority prepares the keys for
Alice who is a network user. There is a secret key $k$ and a
public key $K$ depending on her identity parameters. If she wants
to sign a document $D$, she must solve a hard problem $Pb(K,D)$
which is function of $K$ and $D$. She is able to find a solution
as she possesses a supplementary information: her private key $k$.
For anybody else, the problem, consisting generally of a difficult
equation of a high mathematical level and based on the elements
$K$ and $D$, is intractable even with the help of computers. No
one can forge Alice personal digital signature on the document
$D$. On the other hand, to validate and accept the signature,
anyone and some time it is the judge, can verify if the answer
furnished by Alice is correct or not.

One of the first concrete digital signature protocol was proposed
by ElGamal\cite{ElGamal1985} in $1985$. It is based on the
discrete logarithm problem computationally considered as
intractable\cite[p.~103]{HAC1996}\cite[p.~236]{Stinson2006}\cite{Addepalli2017}.
Provided that the signature system parameters are properly
selected, the algorithm security of the scheme has never been
threatened.

Many variants of this scheme have been created. In $1991$, Schnorr
\cite{Schnorr1991} proposed a similar signature protocol inspired
by ElGamel system. The digital signature algorithm or
DSA\cite{DSA1991}, also deduced from the ElGamal signature
equation, was  definitively formulated by the National Institute
of Standards and Technology (NIST) in $1994$\cite{DSA1994}. Some
variants\cite{Chen2008} of the DSA were published and several
attacks were elaborated against it. In $1997$ Bellare et
al.\cite{Bellare1997} presented an attack where they showed that
if the DSA algorithm uses linear congruential pseudorandom number
generator then it is possible to recover the signer secret key. In
$2002$ Nguyen and Shparlinski\cite{Nguyen2002} published a
polynomial time algorithm that can totally break the DSA protocol
if they get few bits from the nonces used to sign a certain number
of documents. More recently Poulakis\cite{Poulakis2016} used
lattices theory to construct a system of linear equation that
leads to the disclosure of the signer secret key. At last, in
$2017$, Angel et al.\cite{Angel2017} with extensive experiments,
elaborated a method where they exploit blocks in the ephemeral
nonce and find the signer private key.

In this paper we propose a new digital signature scheme inspired
by the DSA algorithm. The security and the complexity are
analyzed. Our method constitues an alternative if the  classical
protocol DSA is broken. The drawback of our algorithm is that the
generation of the signature has three parameters instead of two
for DSA  and we have to execute one more modular exponentiation in
the verification step. For a theoretical and a pedagogical
interest, we present an extension of the method.

 The paper is organized as follows: In the next section we briefly
recall the description of the classical DSA digital signature. In
section $3$ we present our contribution and we conclude in section
$4$.

Classical notations will be adopted. So $\mathbb N$ is the set of
all natural integers. When $a,b,n\in\mathbb N$, we write $a=b\
mod\ n$ if $a$ is the remainder of the division of the integer $b$
by $n$, and $a\equiv b\ [n]$ if the number $n$ divides the
difference $a-b$. If $p$ is a prime integer then the set
$(\dfrac{\mathbb Z}{p\mathbb Z})^*$ is the multiplicative group of
modular integers $\{1,2,\ldots,p-1\}$.

\noindent We begin by recalling the DSA signature method.

\section {The standard DSA protocol\cite{DSA1994,DSA1991} } In this section, we describe the basic
DSA scheme followed by the security analysis of
the method.
\subsection{Keys production}
The signer  selects  two primes $p$ and $q$ such that $q$ divides
$p-1$, $ 2^{t-1} < q < 2^{t}$ with:
$t\in \{160,256,384,512\}$,  $ 2^{L-1} < p < 2^{L}$, $ 768 < L < 1024$ and $L$ is a multiple of 64.\\
Then, he chooses a primitive root  $g$ mod $p$ and computes
$\alpha = g^{ \frac{p-1}{q}} \: mod \:  p$. The signer selects
also an integer $x$ such that $1 \leq x \leq q-1$ and calculates $
y = \alpha^{x} \: mod \: p $. Finally, he publishes $(p,q,g,y)$
and keeps the parameter $x$ secret as its private key.

\subsection{The signature generation}
Let $h$ be a secure hash function\cite[p.~33]{HAC1996} that produces a 160-bit output.\\
 To sign a message
$m$, the signer starts by selecting a random secret integer $k$
smaller than $q$ and called the nonce\cite[p.~397]{HAC1996}. Then,
he computes successively $r = (\alpha ^{k} \: mod \: p) \: mod \:
q $ and  $s = \displaystyle \frac{h(m) + a.r}{k} \ mod \: q $.
Finally, the signature is the pair $(r,s)$.

\subsection{The signature verification}
The verifier of the signature calculates: $u_{1} = \displaystyle
\frac{h(m)}{s} \: mod \: q$ and  $u_{2} = \displaystyle
\frac{r}{s} \: mod \: q$.
Then, he computes: $ v = ((\alpha^{u_{1}}y^{u_{2}}) \: mod \: p)\: mod \: q $.\\
Depending on $v = r$ or not,  he accepts or rejects the signature.

\subsection{Security of the method}
To date the DSA system is considered as a secure digital signature scheme. Indeed to break it,  attackers  must first solve a
famous hard mathematical question: the discrete logarithm problem (DLP)\cite[p.~103]{HAC1996}\cite[p.~267]{Stinson2006}. Conversely,
we ignore whether or not breaking the DSA scheme leads to an algorithm for solving the DLP. It's a remarkable open problem.
The introduction of the second large prime $q$ in the DSA mechanism  avoids Pohlig and Hellman attack\cite{Pohlig1978}
on the discrete logarithm problem.\\
Several attacks were mounted against DSA protocol. The reader
is invited to see for instance references\cite{Angel2017,Poulakis2016, Nguyen2002,Bellare1997}.
 It's well known\cite[p.~188]{Mollin2007} that, as for ElGamal signature scheme\cite{ElGamal1985}, using the same nonce $k$
  to sign two different documents  reveals the system secret key. Therefore it's mandatory to change the value of $k$ at each signature.

\noindent The following section describes our main contribution.

\section{A DSA-like digital signature protocol}

In this section, we present a new  DSA-like digital signature and
we analyze its security and complexity.

\subsection{Keys production}
The fabrication of the keys is the same as for the DSA protocol.
The signer selects two primes $p$ and $q$ such that $q$ divides
$p-1$, $ 2^{t-1} < q < 2^{t}$ with:
$t\in \{160,256,384,512\}$,  $ 2^{L-1} < p < 2^{L}$, $ 768 < L < 1024$ and $L$ is a multiple of 64.\\
Then, he chooses a primitive root  $g$ mod $p$ and computes
$\alpha = g^{ \frac{p-1}{q}} \: mod \:  p$. The signer selects
also an integer $x$ such that $1 \leq x \leq q-1$ and calculates $
y = \alpha^{x} \: mod \: p $. Finally, he publishes $(p,q,g,y)$
and keeps the parameter $x$ secret as its private key.

\subsection{The signature generation}

Let $h$ be a collision resistant hush
function\cite[p.~323]{HAC1996} such that the image of any message
is belonging to the set $\{1,2,\ldots,q\}$. \\
To sign the  document  $m$, Alice must solve the modular signature
equation:\begin{equation}\label{eq1}
\alpha^{\frac{h(m)}{t}}y^{\frac{r}{t}}r^{\frac{s}{t}}\  mod\  p
\equiv s\ [q]
\end{equation}
where the unknown parameters $r,s,t$ verify $0<r<p$ et $0<s,t<q$.
\begin{theorem}
Alice, with her secrete key $x$, is able to find a triplet (r,s,t)
that verifies the signature equation \eqref{eq1}.
\end{theorem}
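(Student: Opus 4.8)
The plan is to exploit the fact that $\alpha$ has order $q$ in $(\mathbb{Z}/p\mathbb{Z})^*$, so that once $r$ is chosen as a power of $\alpha$ the entire left-hand side of \eqref{eq1} collapses to a single power of $\alpha$. First I would substitute the public-key relation $y = \alpha^{x} \bmod p$ into \eqref{eq1} and pick a random nonce $k$ with $1 \le k \le q-1$, setting $r = \alpha^{k} \bmod p$ (so that $0 < r < p$ holds automatically). Because $r$ is then itself a power of $\alpha$, the three factors satisfy $\alpha^{h(m)/t} y^{r/t} r^{s/t} \equiv \alpha^{\,t^{-1}(h(m) + x r + k s)} \pmod p$, where every exponent is read modulo $q = \mathrm{ord}(\alpha)$ and each quotient denotes the corresponding product with the inverse $t^{-1} \bmod q$.

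With the left-hand side reduced to $\alpha^{E} \bmod p$ where $E = t^{-1}(h(m) + x r + k s) \bmod q$, the remaining task is to arrange that this value, once reduced modulo $q$, equals $s$ — a fixed-point condition in which $s$ appears on both sides. The device I would use to break this circularity is to prescribe the target exponent first: choose an auxiliary integer $j$ with $1 \le j \le q - 1$, compute $R = \alpha^{j} \bmod p$, and \emph{define} $s := R \bmod q$. Then the signature equation holds as soon as $E \equiv j \pmod q$, because in that case the left-hand side equals $\alpha^{j} \bmod p = R$, whose reduction modulo $q$ is exactly $s$ by construction.

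It remains to secure $E \equiv j \pmod q$, that is $t^{-1}(h(m) + x r + k s) \equiv j \pmod q$, which rearranges to the linear congruence $h(m) + x r + k s \equiv j\,t \pmod q$. Since $q$ is prime and $1 \le j \le q-1$, the element $j$ is invertible, and I would solve $t \equiv j^{-1}\bigl(h(m) + x r + k s\bigr) \pmod q$, taking the representative in $\{1, \dots, q-1\}$. This is the step that genuinely requires Alice's private key: the coefficient $x r$ cannot be formed without $x$, which is precisely why no one else can set up and solve the congruence. The resulting triple $(r,s,t)$ then satisfies \eqref{eq1} by the two previous paragraphs.

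The main obstacle is not the algebra but the boundary conditions $0 < s, t < q$ demanded in the statement: I must rule out the degenerate outcomes $s \equiv 0$, $t \equiv 0$, or $h(m) + x r + k s \equiv 0 \pmod q$. Each of these is a rare coincidence that is cleared by re-drawing the random exponents $k$ and $j$, so the construction succeeds after a negligible number of trials. Making this termination remark precise, together with the bookkeeping that $r^{s/t} \equiv \alpha^{k s/t} \pmod p$ legitimately uses the integer exponent $s\,t^{-1} \bmod q$ because $\mathrm{ord}(\alpha) = q$, is where I would spend the remaining care.
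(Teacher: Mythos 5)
Your construction is exactly the paper's: choose $r=\alpha^{k}\bmod p$, prescribe the target exponent ($j$ here, $l$ in the paper) so that $s=(\alpha^{j}\bmod p)\bmod q$, and solve $t\equiv j^{-1}\bigl(h(m)+xr+ks\bigr)\pmod q$, which is precisely the paper's relation \eqref{eq2}. The argument is correct, and your extra care about the degenerate cases $s\equiv 0$ or $t\equiv 0$ is a minor refinement the paper omits.
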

\begin{proof}
Alice chooses two random numbers $k,l<q$ then computes
$r=\alpha^k\ mod\ p$ and $s=\alpha^l\ mod\ p\ mod\ q$. To get a
solution of equation \eqref{eq1} it suffices to have
$\alpha^{\frac{h(m)}{t}}\alpha^{x\frac{r}{t}}\alpha^{k\frac{s}{t}}\equiv
\alpha^l\  [p]$. On the other hand the third parameter $t$ must
verify $\dfrac{h(m)+xr+ks}{t}=l\ [q]$ or
\begin{equation}\label{eq2}
t=\dfrac{h(m)+xr+ks}{l}\ mod\ q
\end{equation}

\end{proof}

\subsection{The signature verification}

To verify Alice signature $(r,s,t)$, Bob should do the
following:\\
1. He first finds Alice public key $(p,q,\alpha,y)$.\\
2. He verifies that $0<r<p$ and $0<s,t<q$. If not he rejects the
signature.\\
3. He computes $u_1 = \displaystyle \frac{h(m)}{t} \: mod \: q$,
$u_2 = \displaystyle \frac{r \: mod \: q}{t} \: $ $ mod \: q$ and
$u_3 = \displaystyle \frac{s
}{t} \: mod \: q$. \\
4. He determines $v = ((\alpha^{u_1}y^{u_2}r^{u_3}) \: mod \: p)\:
mod \: q$
\\
5. If $v = s$, Bob accepts the signature, otherwise  he rejects
it.

 \vspace{0.3cm} \noindent Before going on, we illustrate the procedure by an example. We took
the same two large primes  $p$ and $q$ and the generator $g$ from
reference\cite{Pornin2013}.

\vspace{0.3cm}\noindent{\bf Example.}
Number $p$ is the $1024$ bit-length prime:\\
$p=94772214835463005053734612688987420745400704367641322402256120323201\\
101201888870137170705373498571313030316679878174804574980244779195907606\\
098768964031739134779278848798229819934901324222106210711842549374102491\\
417296346772453897799554117544442700769168461664359227744193913924495898\\
621041399925210910234489$, \\
and $q=875964080856129786106302881659054003458244253873$.\\
The generator of the multiplicative group $((\dfrac{\mathbb Z}{p\mathbb Z})^*,.)$ is\\
$g=5401015700248054412670727427618885968561082170092747201289836023587070\\
84212996041390972417923715770599593251420677788894704063133555201170818988\\
61805293399530415510300321792173737741806916560867092300436919535062464506\\
65697269251181380801651442389601287318661304902597519067842079816229492516\\
762912476306877$.\\
 We find that:\\
$\alpha=527223567708677258197455859133222937206317701786932418731523516024196571\\
7085174644862184855390335598300773471653030132135612590607930128253659134294\\
8495010997120001192272226027997156250022237643201186825665260281160455408116\\
6154971036251524872300380557802833133742393048305876853382003116504889806209\\
1119992$.\\
Suppose that the secrete key is:\\
$x=371575259833906365510684947508061994685469500919$, so:\\
$y=630775473718828368762466985220658584945893263294126314383120993968845387\\
9656101046604281277749079342479732012406513115015034410271324671714864739538\\
5456320857555381477923626248021182202362864702913038804579836525851535288840\\
6082786247355168380930043030984132308743260982091689595031807130951481230017\\
18879430$.\\
Assume that Alice decides to sign the message $m$ such that
$h(m)=123456789$.
She uses the random secrete exponents $k=1250$ and  $l=98561$. Therefore:\\
$r=578164865504179483400175892277942835618633089799304565306500743940987577\\
3739310417729452466781593533114855965860037328909457721341899371409595718096\\
8602212875285420808831668566865480631794760348104088294970278483930299680828\\
49153157235106512542837849342137689407189514011990142602691770559506\\
0711695234702599$; \\
$s=544621099954698016824748794802717914623249907270$ and\\
$t=556119013460694353294511753174948468444082504155$.\\
To test the validity of Alice signature Bob first determines $u_1$, $u_2$ and $u_3$: \\
$u_1=141694501602616348876138369891969021089672807186$;\\
$u_2=662220963645670062957535725628437873682297068731$;\\
$u_3=68770021753905823557652121472773639225383494491$; \\
then Bob calculates $v=\alpha^{u_1}y^{u_2}r^{u_3}\  mod\  p\  mod\
q$\\
$=544621099954698016824748794802717914623249907270$ which is
exactly $s$. In this case,  Alice  signature is  accepted.

\subsection{Security analysis}
\noindent We discuss here some possible attacks. Suppose that Oscar,
enemy of Alice, tries to impersonate her by signing the message $m$ without knowing her secret key $x$.\\
{\bf Attack 1.} After receiving the signature parameters $(r,s,t)$
of a particular message $m$, Oscar may be wants to find Alice
secret key $x$.  If he uses equation\eqref{eq1}, he will be
confronted to the discrete logarithm problem $a^x\equiv b\ [q]$
where $a=\alpha^{r/t}\ mod\ p$ and $b=s\alpha^{-h(m)/t}\,r^{-s/t}\
mod p$. If Oscar prefers to exploit relation \eqref{eq2}, he needs
to know the two nonces $k$ and $l$. Their computation derives from the discrete logarithm problem.\\
{\bf Attack 2.} Assume now that Oscar arbitrary fixes random
values for two parameters and tries to find the third one. \\
(i) If he fixes $r$ and $s$ in the signature equation \eqref{eq1},
and likes to determine the third unknown parameter $t$,
he will be confronted to the discrete logarithm problem  $a^{t'}\equiv s\ [p]$ where $a=\alpha^{h(m)}y^rr^s\ mod\ p$ and $t'=\dfrac{1}{t}\ mod\ q$.\\
(ii) If Oscar fixes $r$ and $t$ and wants to find the parameter
$s$, he will be confronted to the equation $ab^s\equiv s\ [q]$,
where $a=\alpha^{\frac{h(m)}{t}}y^{\frac{r}{t}}\ mod\ p$ and
$b=r^{1/t}\ mod\ p$. In the mathematical literature, we don't know
any algorithm to solve this kind of modular equation.\\
(3) If Oscar fixes $s$ and $t$ then likes to calculate the
parameter $r$, he must solve the equation $a^rr^b\equiv c\ [q]$,
where $a=y^{1/t} \ mod\ p$, $b=\dfrac{s}{t}\ mod\  q$ and
$c=s\alpha^{-{h(m)}/t}\ mod\ p$.
 There is no known general method for solving this problem.\\
 {\bf Attack 3.} Assume that Alice used the same couple of exponents $(k,l)$ to sign two distinct message $m_1$ and $m_2$. Being aware of this fact, Oscar,
 from the first message signature obtains $lt_1\equiv h(m_1)+xr_1+ks_1\ [q]$ and from the second message $lt_2\equiv h(m_2)+xr_2+ks_2\ [q]$. As $r_1=r_2$ and
 $s_1=s_2$, Oscar is able to calculate the nonce $l$. In contrast to the ElGamal and DSA schemes, it
 seems that there is no easy way to compute the exponent $k$ and then to retrieve Alice secret key
 $x$.\\
 {\bf Attack 4.} Let $n\in\mathbb N$. Suppose that Oscar has collected $n$ valid signatures $(r_i,s_i,t_i)$ for
messages $m_i$, $i\in\{1,2,\ldots, n\}$. Using \eqref{eq2}, he
will  construct a system of $n$ modular equations:\\
 $$
(S) \left\{%
\begin{array}{c}
  l_1t_1\equiv h(m_1)+xr_1+k_1s_1\ [q] \\
  l_2t_2\equiv h(m_2)+xr_2+k_2s_2\ [q] \\
  \vdots\ \ \ \ \ \ \vdots\ \ \ \ \ \  \vdots  \\
  l_nt_n\equiv h(m_n)+xr_n+k_ns_n\ [q] \\
\end{array}%
\right.$$ where $\forall i\in\{1,2,\ldots, n\}$,
$r_i=\alpha^{k_i}\ mod\ p$ and $s_i=\alpha^{l_i}\ mod\ p\ mod\
q$.\\
Since system (S) contains $2n+1$ unknown parameters  $x, r_i, s_i,
i\in\{1,2,\ldots, n\}$, it is not difficult for Oscar to propose a
valid solution. But Alice secret key $x$ has a unique possibility
and therefore Oscar will never be sure what value of x
is the right one. So this attack is not efficient.\\
 {\bf Attack 5.} Let us analyze the existential forgery\cite[p.~285]{Stinson2006}.
 Suppose that the signature protocol is used without the hash function $h$.
 Oscar can put $r=\alpha^k\,y^{k'}\ mod\ p$, $s=\alpha^l\,y^{l'}\ mod\
 p\ mod\ q$ for arbitrary numbers $k,k',l,l'$. To solve the signature equation \eqref{eq1}, it suffices to solve the
 system:\\
$\displaystyle \left\{%
\begin{array}{c}
  \dfrac{m}{t}+k\dfrac{s}{t}\equiv l\ [q] \\
  \dfrac{r}{t}+k'\dfrac{s}{t}\equiv l'\ [q] \\
\end{array}%
\right.$, so $\displaystyle \left\{%
\begin{array}{c}
 m\equiv tl-ks\ mod\ q \\
  t=\dfrac{1}{l'}[r+k's]\ mod\ q \\
\end{array}%
\right.$. Hence $(r,s,t)$ is a valid signature for the message
$m$, but this attack is not realistic.\\
 {\bf Attack 6.}  Suppose that Alice enemy Oscar is able to break
 the DSA scheme. In another words, given $p,q,m, \alpha, y$, he
 can find integers $r,s<q$ such that $\alpha^{\frac{h(m)}{s}}\,y^{\frac{r}{s}}\
 mod \ p\equiv s\ [q]$. There is no evidence that Oscar is able to
 solve equation \eqref{eq1} $\alpha^{\frac{h(m)}{t}}y^{\frac{r}{t}}r^{\frac{s}{t}}\  mod\  p
\equiv s\ [q]$. It's an advantage of our signature model: breaking
the DSA scheme does not lead to breaking our protocol.

\noindent {\bf Remark 1.} We end this security analysis by asking
a question for which we have no answer: If someone is able to
simultaneously break ElGamal, DSA and our own signature protocols,
can he solve the general discrete logarithm problem ?
\subsection{Complexity}
Productions of  public and private keys in our protocol and in the
DSA scheme are identical. So the number of operations to be
executed is the same. In the generation of the signature
parameters, we have one more parameter than in the DSA. To compute
it, we use a a supplementary modular exponentiation. For the
verification step, we calculate three exponentiation
instead of two for the DSA scheme.\\
Let $T_{exp}$ and $T_{mult}$ the times necessary to compute
respectively an exponentiation and a multiplication. The total
time to execute all operations using our method is as follows:
\begin{equation}\label{eq3}
T_{tot} = 7T_{exp} + 8T_{mult}
\end{equation}
As $T_{exp} = O(\log^3 n)$ and $T_{mult} = O(\log^2 n)$, (see
\cite[p.~72]{HAC1996}),  the final complexity of our signature
scheme is
\begin{equation}\label{eq4}
T_{tot} =  O(\log^2 n + \log^3 n)=O(\log^3 n)
\end{equation}
 This proves that the execution of the protocol works in a polynomial time.

\subsection{Theoretical generalization}
For it's pedagogical and mathematical interest, we end this paper
by giving an extension of the signature equation \eqref{eq1}.
Let $h$ be a known and secure hash function as mentioned in section $2$ and in the beginning of this section.\\
We fix an integer $n\in\mathbb N$ such that $n\geq 2$.\\
{\bf 1.} Alice begins by choosing her public key $(p, q,\alpha,y)$, where $p$ and $q$ primes such that $q$ divides $p-1$.\\
Element $\alpha$ is a generator of the subgroup of $\displaystyle
\frac{\mathbb Z}{p\mathbb Z}$ whose order  is $q$. $y=\alpha^x\
mod\ p$ where $x$ is a secret parameter
in $\{1,2,\ldots,q-1\}$. Integer $x$ is  Alice private key.\\
{\bf 2.} If Alice likes to product a digital signature of a message $m$, she must solve the congruence:
\begin{equation}\label{eq5}
\alpha^{\frac{h(m)}{r_{n+1}}}\,y^{\frac{r_1}{r_{n+1}}}\,
{r_1}^{\frac{{r_2}}{r_{n+1}}}\,
{r_2}^{\frac{{r_3}}{r_{n+1}}}\,\ldots
{r_{n-1}}^{\frac{{r_n}}{r_{n+1}}}\ mod\ p \equiv r_n\ [q]
\end{equation}
where the unknown parameters $r_1,r_2,\ldots, r_{n+1}$ verify
\begin{equation}\label{eq6}
 0<r_1,r_2, \ldots, r_{n-1}<p {\rm \ and\ }
0<r_n,r_{n+1}<q.
\end{equation}
\begin{theorem}
Alice, with her secrete key $x$ can determine an $(n+1)$uplet
$(r_1,r_2,\ldots,r_n,r_{n+1})$ that verifies the modular relation
\eqref{eq5}.
\end{theorem}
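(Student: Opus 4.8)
The plan is to mimic exactly the proof of Theorem 1, exploiting the fact that Alice, unlike an outsider, can force every base appearing in \eqref{eq5} to be a known power of $\alpha$. First I would have Alice pick $n$ random exponents $k_1,k_2,\ldots,k_{n-1},l$ in $\{1,2,\ldots,q-1\}$ and set $r_i=\alpha^{k_i}\ mod\ p$ for $i=1,\ldots,n-1$, together with $r_n=\alpha^l\ mod\ p\ mod\ q$. Since $\alpha$ is a unit modulo $p$, each $r_i$ is a nonzero residue, so the conditions $0<r_1,\ldots,r_{n-1}<p$ and $0<r_n<q$ of \eqref{eq6} hold automatically, and only $r_{n+1}$ remains to be determined.

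Next I would substitute these choices, together with $y=\alpha^x\ mod\ p$, into the left-hand side of \eqref{eq5}. Because $\alpha$, $y$ and $r_1,\ldots,r_{n-1}$ are now all powers of $\alpha$ whose exponents Alice knows, the entire product collapses to a single power of $\alpha$:
$$\alpha^{\frac{h(m)}{r_{n+1}}}\,y^{\frac{r_1}{r_{n+1}}}\,{r_1}^{\frac{r_2}{r_{n+1}}}\cdots{r_{n-1}}^{\frac{r_n}{r_{n+1}}}\equiv\alpha^{\frac{1}{r_{n+1}}\left(h(m)+xr_1+\sum_{i=1}^{n-1}k_i r_{i+1}\right)}\ [p].$$
To make this equal to $\alpha^l\ mod\ p$, which then reduces modulo $q$ to exactly $r_n$, it suffices to equate the exponents. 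As $\alpha$ has order $q$, equating exponents amounts to the single linear congruence
$$\frac{h(m)+xr_1+\sum_{i=1}^{n-1}k_i r_{i+1}}{r_{n+1}}\equiv l\ [q].$$

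Finally I would solve this congruence for the last unknown. Since $q$ is prime and $0<l<q$, the integer $l$ is invertible modulo $q$, so Alice sets
$$r_{n+1}=\frac{h(m)+xr_1+\sum_{i=1}^{n-1}k_i r_{i+1}}{l}\ mod\ q,$$
the exact analogue of \eqref{eq2}, which lies in $\{1,\ldots,q-1\}$ as required (and should the numerator happen to be divisible by $q$, she simply reselects the random exponents). I do not expect any genuine obstacle: the statement only asserts that Alice can exhibit a solution, and the argument is a routine generalization of Theorem 1. The only points needing care are the index bookkeeping that pairs $k_i$ with $r_{i+1}$ so that the chained product collapses correctly, and the verification that the constructed $r_i$ satisfy the range constraints \eqref{eq6} and that $l$ is invertible.
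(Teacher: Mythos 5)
Your proposal is correct and follows essentially the same route as the paper's own proof: choose $r_i=\alpha^{k_i}\bmod p$ for $i<n$ and $r_n=\alpha^{l}\bmod p\bmod q$, collapse the product in \eqref{eq5} to a single power of $\alpha$, equate exponents modulo $q$, and solve the resulting linear congruence for $r_{n+1}=\frac{1}{l}\bigl[h(m)+xr_1+\sum_{i=1}^{n-1}k_ir_{i+1}\bigr]\bmod q$ (the paper writes $k_n$ where you write $l$). Your added remarks on the range constraints and the invertibility of $l$ modulo $q$ are details the paper leaves implicit, but they do not change the argument.
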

\begin{proof}
The signer Alice selects $n-1$ random numbers
$k_1,k_2,\ldots,k_{n-1}\in \mathbb N$ lm2ess than the prime $q$
then computes $r_i=\alpha^{k_i}\ mod\ p$ for every
$i\in\{1,2,\ldots,n-1\}$ and $r_n=\alpha^{k_n}\mod\ p\ mod\ q$. We
have:\\
Equation \eqref{eq5} $ \Longleftrightarrow
\alpha^{\frac{h(m)}{r_{n+1}}}\, \alpha^{x\frac{r_1}{r_{n+1}}}\,
\alpha^{{\displaystyle  \sum_{i=1}^{n-1}}\,
k_i\frac{r_{i+1}}{r_{n+1}}}\
mod\ p\ mod\ q=\alpha^{k_n}\ mod\ p\ mod\ q$.\\
It suffices to have $\alpha^{\frac{h(m)}{r_{n+1}}}\,
\alpha^{x\frac{r_1}{r_{n+1}}}\, \alpha^{{\displaystyle  \sum_{i=1}^{n-1}}\,
k_i\frac{r_{i+1}}{r_{n+1}}}\ mod\
p=\alpha^{k_n}\ mod\ p$, which is equivalent to :\\
$\dfrac{h(m)}{r_{n+1}}+x\frac{r_1}{r_{n+1}}+\displaystyle
\sum_{i=1}^{n-1}\, k_i\frac{r_{i+1}}{r_{n+1}}\equiv k_n\ [q]$. So Alice determines the last unknown parameter $r_{n+1}$ by calculating
\begin{equation}\label{eq6}
r_{n+1}=\frac{1}{k_n}[h(m)+xr_1+\sum_{i=1}^{n-1}\, k_ir_{i+1}]\ mod\ q
\end{equation}
\end{proof}
{\bf 3.} If Bob receives from Alice her signature preuve $(r_1,
r_2,\ldots,r_n,r_{n+1})$, he will be able to check whether the
modular equation \eqref{eq5} is valid or not. He then deduces if
he accepts or rejects this signature.

\noindent {\bf Remark 2.} Let $k_0$ be Alice secret key $x$,
${\overrightarrow{u}}$ and ${\overrightarrow{v}}$  respectively
the vectors $(k_0,k_1,\ldots,k_{n-1})$ and $(r_1,r_2,\ldots,r_n)$.
To easily memorize equality \eqref{eq6}, observe that
$r_{n+1}=\dfrac{h(m)+\overrightarrow{u}.\overrightarrow{v}}{k_n}$
where $\overrightarrow{u}.\overrightarrow{v}$ denotes the
classical inner product of the two vectors $\overrightarrow{u}$
and $\overrightarrow{v}$.

\section{Conclusion}
\label{ConclusionS4} In this article, a new digital signature
protocol was presented. We studied in details the security of the
method and gave an analysis of its complexity. Our contribution
can be seen as an alternative if the DSA algorithm is totally
broken. For its purely mathematical and pedagogical interest, we
furnished a general form of our proposed signature equation.

\baselineskip 20pt


\begin{thebibliography}{8}
\bibitem{Addepalli2017}
Addepalli V. N. Krishna, Addepalli Hari Narayana \& K. Madhura Vani
{\it Fully homomorphic encryption with matrix based digital signature standard}, Journal of Discrete Mathematical Sciences and Cryptography, 20:2, 439-444, DOI: 10.1080/09720529.2015.1101882 (2017).

\bibitem{Angel2017}
J.~Angel, R.~Rahul,  C.~Ashokkumar,  B.~Menezes, {\it  DSA signing
key recovery with noisy side channels and variable error rates},
Progress in cryptology Indocrypt, pp147--165, Lecture Notes in
Comput. Sci., 10698, Springer, Cham, (2017).

\bibitem{Bellare1997}
M.~Bellare, S.~Goldwasser, and D.~Micciancio, {\it Pseudo-random
number generation within cryptographic algorithms: the DSS case},
In Proc. of Crypto'97, volume 1294 of LNCS. IACR, Palo Alto, CA,
Springer- Verlag, Berlin, (1997).


\bibitem{Chen2008}
L.~Chen-Yu, L. and L.~ Wei-Shen, {\it Extended DSA.} Journal of
Discrete Mathematical Sciences and Cryptography Vol. 11, 5,
pp545--550, (2008).

\bibitem{DH1976}
W.~Diffie and M.~E.~ Hellman, {\it New directions in
cryptography.} Information Theory, IEEE Transactions 22, N. 6,
pp644--654, (1976).

\bibitem{ElGamal1985}
T.~ElGamal, {\it A public key cryptosystem and a signature scheme
based on discrete logarithm problem.} IEEE Trans. Info. Theory ,
IT-31, N. 4, pp469--472, (1985).

\bibitem{HAC1996}
J.~A.~ Menezes, P.~C.~ Van~Oorschot, and S.~ A.~ Vanstone, {\it
Handbook of applied cryptography.} CRC press, (1996).

\bibitem{Mollin2007}
R.~ Mollin, {\it
An Introduction to cryptography},  Seconde edition, Chapman \& Hall/CRC, (2007).


\bibitem{DSA1994}
National institute of standard and technology (NIST). FIPS
Publication 186, DSA, Department of commerce, (1994).

\bibitem{Nguyen2002}
P.~Q.~Nguyen and I.~E.~Shparlinski, {\it The insecurity of the
Digital Signature Algorithm with partially known nonces}, J. of
Cryptology, pp151--176, (2002).

\bibitem{Pohlig1978}
P.~Pohlig and M.~Hellman, {\it An Improved Algorithm for Computing Logarithms over
GP(p) and Its Cryptographic Significance}, IEEE Transaction on information theory, Vol. IT 24, n. 1, (1978).


\bibitem{Pornin2013}
T.~Pornin, {\it Deterministic usage of the digital signature
algorithm (DSA) and elliptic curve digital signature algorithm
(ECDSA).} RFC 6979, (2013).

\bibitem{Poulakis2016}
D.~Poulakis, {\it New lattice attacks on DSA schemes}, J. Math.
Cryptol. 10, no. 2, pp135--144, (2016).

\bibitem{RSA1978}
Rivest, R., Shamir, A.,   \& Adeleman, L. {\it A  method for
obtaining digital  signatures  and  public  key cryptosystems},
Communication of the ACM Vol. no 21 (1978).

\bibitem{Schnorr1991}
C.~P.~Schnorr, {\it Efficient Signature Generation by Smart
Cards.} Journal of Cryptology,  pp161--174, (1991).

\bibitem{Shannon1949}
C.~E.~Shannon, {\it Communication Theory of Secrecy Systems}, Bell
System Technical Journal vol 28, pp656--715, (1949).

\bibitem{Stinson2006}
D.~R.~Stinson, {\it Cryptography, Theory and Practice}, Third
edition, Chapman \& Hall/CRC (2006).

\bibitem{DSA1991}
 http://{\it www.umich.edu/~x509/ssleay/fip186/fip186.htm}.

\end{thebibliography}
\end{document}